\documentclass[preprint,12pt]{elsarticle}

\usepackage{amsmath,amssymb,amsfonts}
\usepackage{algorithmic}
\usepackage{graphicx}
\usepackage{textcomp}
\usepackage[dvipsnames]{xcolor}
\usepackage{amsthm, tikz, hyperref}

\newtheorem{theorem}{Theorem}[section]
\newtheorem{corollary}[theorem]{Corollary}

\newtheorem{definition}[theorem]{Definition}
\newtheorem{proposition}[theorem]{Proposition}
\newtheorem{conjecture}[theorem]{Conjecture}

\journal{Information and Computation}

\begin{document}

\begin{frontmatter}



\title{Conjectural Decidability of the Skolem Problem}

\author{Florian Luca\fnref{label1,label2}}
\affiliation{
organization={Mathematics Division, Stellenbosch University},
city={Stellenbosch},
country={South Africa}
}
\author{Jo\"el Ouaknine\fnref{label2,label3,label4}}
\affiliation{
organization={Max Planck Institute for Software Systems},
addressline={Saarland Informatics Campus},
city={Saarbr\"ucken},
postcode={66123},
country={Germany}
}
\author{James Worrell\fnref{label5}}
\affiliation{%
organization={Department of Computer Science, Oxford University},
addressline={Wolfson Building, Parks Road},
city={Oxford},
postcode={OX1 3QD},
country={UK}
}
\fntext[label1]{Also affiliated with the Max Planck Institute for
  Software Systems, Germany.}
\fntext[label2]{Supported by ERC grant DynAMiCs (101167561).}
\fntext[label3]{Supported by DFG grant 389792660 as part of TRR 248.}
\fntext[label4]{Also affiliated with Keble College, Oxford as
  emmy.network Fellow.}
\fntext[label5]{Supported by EPSRC grant EP/X033813/1.}

\begin{abstract}
The Skolem Problem asks to determine whether a given integer linear
recurrence sequence (LRS) has a zero term. This problem, whose
decidability has been open for many decades, arises across a wide
range of topics in computer science, including loop termination,
formal languages, automata theory, and probabilistic model checking,
amongst many others.

In the present paper, we introduce a notion of ``large'' zeros of
(non-degenerate) linear recurrence sequences, i.e., zeros occurring at
an index larger than a double exponential of the magnitude of the
data defining the given LRS\@. We establish two main results. First,
we define an infinite set of prime numbers, termed ``good'', having
density one amongst all prime numbers, with the following property:
for any large zero of a given LRS, there is an interval around the
large zero together with an upper bound on the number of good primes
possibly present in that interval. The bound in question is much lower
than one would expect if good primes were distributed similarly as
ordinary prime numbers, as per the Cram\'er model in number theory. We
therefore conclude, conditionally on a strengthening of the classical
Cram\'er conjecture, that large zeros do not exist, which would entail
decidability of the Skolem Problem.  Second, we show unconditionally
that large zeros are very sparse: the set of positive integers that
can possibly arise as large zeros of some LRS has null density. This
in turn immediately yields a Universal Skolem Set of density one,
answering a question left open in the literature.
\end{abstract}



\begin{keyword}
  Skolem Problem \sep linear recurrence sequences \sep decidability
  \sep Cram\'er conjecture



\end{keyword}

\end{frontmatter}

\section{Introduction}
An (integer) linear recurrence sequence (LRS) $\langle u_n\rangle_{n=0}^\infty$
is a sequence of integers satisfying a recurrence of the form
\begin{equation}
\label{eq:1}
u_{n+k}=a_1 u_{n+k-1}+\cdots+a_k u_n 
\end{equation}
where the coefficients $a_1,\ldots,a_k$ are integers.  The celebrated
theorem of Skolem, Mahler, and Lech~\cite{Sko34,Mah35,Lec53}
describes the set of zero terms of such a recurrence:
\begin{theorem}
  \label{thm:SML}
  Given an integer linear recurrence sequence $\langle u_n\rangle_{n=0}^\infty$, the
  set $\{n\in\mathbb{N}:u_n=0\}$ is a union of finitely many
  arithmetic progressions together with a finite set.
\end{theorem}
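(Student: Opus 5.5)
The plan is the classical $p$-adic argument of Skolem--Mahler--Lech. First, write $u_n$ in matrix form: there is a companion matrix $M\in\mathbb{Z}^{k\times k}$ and vectors $\alpha,\beta\in\mathbb{Z}^k$ with $u_n=\alpha^{\top}M^n\beta$ for all $n$. If $a_k=0$ the recurrence has smaller order after discarding finitely many initial terms, so by induction on $k$ we may assume $a_k\neq 0$, i.e.\ $\det M\neq 0$.

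Next, choose a prime $p$ not dividing $a_k$, so that $M$ is invertible modulo $p$. Since $\mathrm{GL}_k(\mathbb{Z}/p\mathbb{Z})$ is finite, there is $N\ge 1$ with $M^N\equiv I \pmod p$. Write $M^N=I+pA$ with $A\in\mathbb{Z}^{k\times k}$ (for $p=2$ one should take $M^N\equiv I \pmod 4$ instead; pass to a further power to arrange this). For each residue $r\in\{0,\dots,N-1\}$, consider the subsequence $v_r(m)=u_{r+Nm}=\alpha^{\top}M^r(I+pA)^m\beta$. Expanding binomially gives
\[
v_r(m)=\sum_{j\ge 0}\binom{m}{j}p^j\,\alpha^{\top}M^rA^j\beta .
\]
The key analytic step is to show that this defines a function $f_r:\mathbb{Z}_p\to\mathbb{Z}_p$ given by a power series in $m$ converging on all of $\mathbb{Z}_p$. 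Writing $\binom{m}{j}=\frac{1}{j!}m(m-1)\cdots(m-j+1)$ and using $v_p(j!)\le (j-1)/(p-1)$, the coefficient of $m^i$ collects terms $p^j/j!$ times integers. Their valuations are at least $j-(j-1)/(p-1)$, which tends to infinity (for $p\ge3$; for $p=2$ the mod-$4$ choice supplies the extra factor). Hence $f_r(x)=\sum_i c_i x^i$ with $c_i\in\mathbb{Q}_p$ and $c_i\to 0$ $p$-adically, uniformly enough to be a strictly convergent power series on $\mathbb{Z}_p$. I expect this valuation bookkeeping, including the rearrangement of the double sum into a power series in $m$, to be the main technical obstacle.

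Finally, apply Strassmann's theorem: a nonzero power series in $\mathbb{Z}_p\langle x\rangle$ with coefficients tending to $0$ has only finitely many zeros in $\mathbb{Z}_p$, at most the largest index of a coefficient of maximal absolute value. Thus for each $r$ there are two cases. If $f_r\equiv 0$, then $u_{r+Nm}=0$ for all $m$, giving the full arithmetic progression $r+N\mathbb{N}$. Otherwise $f_r$ has finitely many zeros in $\mathbb{Z}_p\supseteq\mathbb{N}$, so $u_{r+Nm}=0$ for only finitely many $m$. Taking the union over the $N$ residues gives the claimed structure: finitely many arithmetic progressions with common difference $N$, together with a finite set, plus the finitely many initial indices discarded in the reduction to $a_k\neq0$. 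If one wants to avoid Strassmann, the substitute is a Weierstrass-preparation argument, but Strassmann is the cleanest route.
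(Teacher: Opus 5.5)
The paper gives no proof of this statement. It quotes it as the classical Skolem--Mahler--Lech theorem and points to the original papers and to the expository accounts of Berstel--Reutenauer and Tao. Your proposal is the standard $p$-adic proof found in those accounts, and it is correct. Since the sequence has integer terms, you rightly need no embedding of a number field into $\mathbb{Q}_p$.

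The steps check out:
\begin{itemize}
\item Shifting by one index lowers the order when $a_k=0$.
\item Finiteness of $\mathrm{GL}_k(\mathbb{Z}/p\mathbb{Z})$, or of $\mathrm{GL}_k(\mathbb{Z}/4\mathbb{Z})$, yields $N$.
\item The estimate $v_p(p^j/j!)\ge j-(j-1)/(p-1)$ (respectively $v_2(4^j/j!)\ge j+1$) makes the terms of the double series, indexed by pairs $(i,j)$ with $i\le j$, tend to $0$. This justifies the rearrangement for $x\in\mathbb{Z}_p$ and gives $c_i\to 0$.
\item The interpolation is exact on $\mathbb{N}$ because $\binom{m}{j}=0$ for $j>m$.
\item Strassmann's theorem then settles each residue class.
\end{itemize}

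Two cosmetic points. You can avoid the $p=2$ detour by simply choosing an odd prime not dividing $a_k$. Also, the $c_i$ actually lie in $\mathbb{Z}_p$, since each $p^j/j!$ has nonnegative valuation and the Stirling coefficients are integers. This matches your later appeal to $\mathbb{Z}_p\langle x\rangle$ rather than your earlier ``$c_i\in\mathbb{Q}_p$''.

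Your argument also exhibits exactly the ineffectiveness the paper stresses. Deciding whether $f_r\equiv 0$ is easy, because $m\mapsto u_{r+Nm}$ satisfies a recurrence of order at most $k$, so only its first $k$ terms need checking. Strassmann limits how many zeros a non-vanishing class can contain, but nothing in the argument bounds how large those zeros are. In the paper, the Amoroso--Viada bound plays the role of such a zero count. The Cram\'er-type hypothesis on good primes is what turns that count into a bound on where zeros can lie.
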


The statement of Thm.~\ref{thm:SML} can be refined by considering
the notion of \emph{non-degeneracy} of an LRS\@.  An LRS is
non-degenerate if in its minimal recurrence no quotient of two
distinct roots of the characteristic polynomial is a root of
unity.\footnote{For basic definitions, facts, and properties concerning linear
  recurrence sequences, we refer the reader to standard texts such as
  \cite[Chaps.~1 and 2]{BOOK}, \cite[Chap.~4]{TCT}, or \cite[Chap.~4]{Stanley}.}
A
given LRS can be effectively decomposed as the interleaving of finitely many
non-degenerate sequences, some of which may be identically zero.  The
core of the Skolem-Mahler-Lech theorem is the fact that a non-zero
non-degenerate linear recurrence sequence has finitely many zero
terms.  Unfortunately, all known proofs of this last result are
ineffective: it is not known how to compute the finite set of zeros of
a given non-degenerate linear recurrence sequence.  It is readily seen
that existence of a procedure to do so is equivalent to the existence
of a procedure to determine whether an arbitrary given LRS has a zero
term; the latter is known as the Skolem Problem. We
refer to~\cite[Chap.~6]{Berstel2010NoncommutativeRS}
and~\cite[Chap.~X]{Tao08} for
  expository accounts of the Skolem-Mahler-Lech theorem and discussion
  of the ineffectiveness of known proofs.

  In computer science, the Skolem Problem lies at the heart of key
decision problems in formal power series~\cite{RS94,BRS06}, stochastic
model checking~\cite{PiribauerB20}, control theory~\cite{BlondelT00,
  FijalkowOPP019}, and loop termination~\cite{OWSiglog15}.  The
problem is also closely related to membership problems on commutative
matrix groups and semigroups, as considered in~\cite{CLZ00,KL86}.  We
note that in several of the above-mentioned citations, the Skolem
Problem is used as a reference benchmark to establish hardness of
other open decision problems.

Decidability of the Skolem Problem is known only for certain special
cases, based on the relative order of the absolute values of the
characteristic roots.  Say that a characteristic root $\lambda$ is
\emph{dominant} if its absolute value is maximal among all the
characteristic roots.  Decidability is known in case there are at most
$3$ dominant characteristic roots, and also for recurrences of order
at most $4$~\cite{MST84,Ver85}.  However for LRS of order $5$ it is
not currently known how to decide the Skolem Problem.  For a (highly
restricted) subclass of LRS, the paper~\cite{AkshayBMV020} obtains
nearly matching complexity lower and upper bounds for the problem.

Some recent lines of research have succeeded in establishing
conditional decidability of the Skolem Problem for simple LRS (i.e.,
LRS none of whose characteristic roots are repeated), assuming certain
classical number-theoretic
conjectures~\cite{LLNOP022,BLNOPW22}. Nevertheless, to the best of our
knowledge, no putative algorithm has to date been proposed to solve
the Skolem Problem in full generality.

A different approach was initiated in~\cite{LOW21,LOW22,LMNOW} via the
notion of \emph{Universal Skolem Sets}. An infinite, recursive set
$\mathcal{S} \subseteq \mathbb{N}$ is a Universal Skolem Set if there
is some algorithm which, given any LRS, determines whether or not the
LRS has a zero in $\mathcal{S}$. Decidability of the Skolem Problem is
then of course equivalent to the assertion that $\mathbb{N}$ is itself
a Universal Skolem Set. The authors of~\cite{LOW21} succeded in
exhibiting a \emph{sparse} Universal Skolem Set, i.e., a set having
null density, and left open the question of whether Universal Skolem
Sets of strictly positive density, or even density one, could be
constructed (the interest in high-density Universal Skolem Sets being that they approximate
$\mathbb{N}$ more closely). The question was partially answered
in~\cite{LOW22}, which presented a positive-density Universal Skolem Set,
albeit restricted to simple LRS, and in~\cite{LMNOW}, which exhibited a
Universal Skolem Set of strictly positive density, and even
established density $1$ subject to the Bateman-Horn conjecture in
number theory.

In this paper we propose an explicit bound for the largest zero of a
non-degenerate LRS in terms of the data describing the LRS\@.  We call
zeros that exceed this bound \emph{large zeros} of the LRS\@.
Evidently, decidability of the Skolem Problem would follow from a
proof that large zeros do not exist.  Using deep upper bounds on the
cardinality of the set of zeros of non-degenerate algebraic LRS due to Amoroso
and Viada, we show that the set of positive integers arising as
large zeros of some non-degenerate LRS has null density, which in turn
yields a Universal Skolem Set of unconditional density one. We present
this result in Sec.~\ref{sec:sparse}.

While a proof that large zeros do not exist currently seems well out
of reach, we give a heuristic argument as to why this should
nevertheless be expected.  This argument is based on an analogue of
the well-known Cram\'{e}r conjecture on gaps between consecutive
primes.  This conjecture, originally formulated by Cram\'er in
1936~\cite{cramer1936} and subsequently refined by various number
theorists into its present form, asserts that, for some constant
$\kappa>1$, for every prime $p$ the distance to the next largest prime
is at most $\kappa (\log p)^2$.  The conjecture is based on the
heuristic that the sequence of prime numbers behaves similarly to a
Poisson-like random process in which the probability of a number $x$
being prime is $1/\log x$.  The largest observed prime gap is approximately
$0.9206 (\log p)^2$ (involving a search over all primes up to $4 \cdot
10^{18}$)~\cite{OHP14}, however the best known
upper bound on prime gaps is $O(p^{0.525})$, due to Baker, Harman, and
Pintz~\cite{BakerHarmanPintz2001}, which is far from Cram\'{e}r's
conjectured bound. Cram\'er himself proved that, under the Riemann
hypothesis, prime gaps are bounded above by
$O(p^{0.5}\log p)$~\cite{cramer1936}.  On the other hand, the best
known lower bound on largest prime gaps is
$\Omega\left( \frac{\log p \log \log p \log \log \log \log p}{\log
    \log \log p} \right)$, due to Ford, Green, Konyagin, Maynard, and Tao~\cite{ford18},
which is some way from the conjectured upper bound. We refer to~\cite{Granville1995} for
a discussion of Cram\'{e}r's conjecture and its refinements.

Here we define a subset of so-called \emph{good} primes based on
divisibility properties of LRS\@.  We show that the set of good primes
has density one in the set of all primes, or in other words that,
asymptotically speaking, almost all primes are good primes.  We
further show that if the Cram\'{e}r conjecture applies also to gaps
between consecutive good primes, then large zeros of LRS cannot
exist. The proof of the latter result 
proceeds by establishing an upper bound on the number of good primes
in the neighbourhood of a large zero that violates the conjectured
upper bound on gaps between good primes.  In other words, if good
primes are distributed according to Cram\'{e}r's heuristic then large
zeros cannot exist and the Skolem Problem is decidable.
  
\section{Background}
  We will need some basic notions concerning algebraic numbers.  All
  material can be found in~\cite{FrohlichTaylor1993}.  Recall that a \emph{number field}
  $\mathbb{K}$ is a subfield of $\mathbb{C}$ that is finite
  dimensional as a vector space over $\mathbb{Q}$.  We assume that
  $\mathbb K$ is a Galois extension of $\mathbb Q$, that is, it arises
  as the splitting field of a polynomial with integer coefficients.
  All elements of $\mathbb K$ are algebraic over $\mathbb Q$, that is,
  they arise as roots of polynomials with integer coefficients.  Those
  elements that arise more specifically as roots of monic polynomials
  with integer coefficients are called \emph{algebraic integers}.  The
  algebraic integers in $\mathbb{K}$ form a subring, denoted
  $\mathcal{O}_{\mathbb{K}}$.

For a number field $\mathbb{K}$, we denote by
$\mathrm{Gal}(\mathbb{K}/\mathbb{Q})$ the group of field automorphisms
of $\mathbb{K}$.  Given $\alpha \in \mathbb{K}$, the \emph{norm} of
$\alpha$ is defined by
\begin{gather*}
\mathcal{N}_{\mathbb{K}/\mathbb{Q}}(\alpha) = \!\!\!\!\! \prod_{\sigma \in 
  \mathrm{Gal}(\mathbb{K}/\mathbb{Q})} \!\!\! \!\!\! \sigma(\alpha) \, .
\end{gather*}
The norm $\mathcal{N}_{\mathbb{K}/\mathbb{Q}}(\alpha)$ is rational for all 
$\alpha\in\mathbb{K}$; moreover $\mathcal{N}_{\mathbb{K}/\mathbb{Q}}(\alpha) =
0$ iff $\alpha =0$, and
$\mathcal{N}_{\mathbb{K}/\mathbb{Q}}(\alpha)$ is 
an integer if $\alpha \in \mathcal{O}_{\mathbb{K}}$.  Clearly we have
$|\mathcal{N}_{\mathbb{K}/\mathbb{Q}}(\alpha)| \leq M^{d_{\mathbb{K}}}$, where
$d_{\mathbb{K}}$ is the degree of $\mathbb{K}$ and
\begin{gather*}
  M  := \!\!\! \max_{\sigma \in \mathrm{Gal}(\mathbb{K}/\mathbb{Q})}
  |\sigma(\alpha)|
\end{gather*}
is the \emph{house} of $\alpha$.

We recall that every ideal in $\mathcal{O}_{\mathbb{K}}$ can be
written uniquely up to the order of its factors as the product of
prime ideals.  Given a rational prime $P \in \mathbb{Z}$, we say that
a prime ideal $\mathfrak{p}$ \emph{lies above} $P$ if $\mathfrak{p}$ is a
factor of $P\mathcal{O}_{\mathbb{K}}$.  In this case we have that
$P \mid \mathcal{N}_{\mathbb{K}/\mathbb{Q}}(\alpha)$ for all
$\alpha \in \mathfrak{p}$.

Let $\mathfrak{p}$ be a prime ideal of ${\mathcal O}_{\mathbb K}$
lying above $P\in\mathbb Z$.  Recall that the Frobenius automorphism
$\sigma\in {\rm Gal}({\mathbb K}/{\mathbb Q})$ corresponding
to $\mathfrak{p}$ is such that
$\sigma(\alpha) \equiv \alpha^P \bmod\mathfrak{p}$ for all
$\alpha \in \mathcal{O}_{\mathbb{K}}$; in fact it is the unique Galois
automorphism with this property. Note however that for the
Frobenius automorphism to be
well-defined on ${\mathbb K}$, it is necessary for $P$ to be
unramified, for which it suffices to check that $P$ does not
divide the discriminant of ${\mathbb K}$. 

\section{Large Zeros and Good Primes}
\label{sec:large-zeros}
For an LRS $\mathbf{u} = \langle u_n\rangle_{n=0}^\infty$ as
in~\eqref{eq:1}, define its \emph{height}\footnote{Note that we consider
  here the \emph{magnitude} of the numbers defining a given LRS (rather than their
  bit size as is more common in complexity theory). An alternative
  definition in terms of bit size would of course be possible, only
  requiring adjusting~\eqref{eq:large} appropriately.}
to be 
\[
H_{\mathbf{u}}:=\max\{k,|a_1|,\ldots,|a_k|,|u_0|,\ldots,|u_{k-1}|\}
\, .
\]

Note that there are only finitely many different LRS of any specified
height bound. In the rest of the paper, we shall focus exclusively on
LRS ``of sufficient height''; by this we implicitly postulate the
existence of some absolute constant $C$ and further assume that all
LRS under consideration have height in excess of $C$. In our
subsequent development, it would be straightforward to provide
explicit suitable numerical values for each of the lower bounds that
we require,\footnote{In fact, in all of the height-related asymptotic inequalities
  that we derive, positing a lower bound of $12$ on the height 
  would suffice.} however an explicit numerical value for $C$ would still depend
on the precise formulation of the strengthening of the
Cram\'er-Granville conjecture that we introduce in Sec.~\ref{sec:cramer} (see
Conjecture~\ref{conj:good}).

We say that $n$ is a zero of $\mathbf{u}$ if $u_n=0$, and 
we say that it is a  \emph{large zero} if the inequality 
\begin{equation}
\label{eq:large}
n<\exp \exp(10H_{\mathbf{u}}\log H_{\mathbf{u}})
\end{equation}
fails. As we argue later on, there are good reasons to expect that
\eqref{eq:large} holds for \emph{all} zeros of all sufficiently high non-degenerate LRS, which
in turn would establish decidability of the Skolem
Problem.\footnote{The expression in \eqref{eq:large} has
  of course been chosen in order for our mathematical argument to go
  through. In actual fact, it is plausible to expect that an
  expression involving a \emph{single}
  exponential would suffice: as far as we are aware, there is
  currently no known construction of a family of non-degenerate LRS
  having zeros at indices of magnitude merely singly exponential in
  the height of the LRS, as defined above.}

\subsection{Bad Primes and Good Primes}

In this section, let $\mathbf{u}$ be a fixed sufficiently high
non-degenerate LRS satisfying~\eqref{eq:1}, and let us write
$H:=H_{\mathbf{u}}$ (that is, we omit the explicit dependence on
$\mathbf{u}$).

We can express the general term $u_t$ of
$\mathbf{u}$ in exponential-polynomial form, i.e.,
\begin{equation}
  \label{eq:exp-poly-solve}
u_t=\sum_{i=1}^s Q_i(t) \alpha_i^t \, ,
\end{equation}
where $s \leq H$ and $\alpha_1,\ldots,\alpha_s$ are the roots of the characteristic
polynomial
\[ x^k - a_1x^{k-1} - \cdots - a_k \]
of $\mathbf u$ and $Q_1,\ldots,Q_s$ are
univariate polynomials. Note that all characteristic roots are
algebraic integers since the characteristic polynomial is monic and
comprises exclusively coefficients in $\mathbb{Z}$.
Recall that if $\alpha_i$ has multiplicity $\mu_i$ as a
characteristic root then $Q_i$ has degree at most
$\mu_i -1$. Let
${\mathbb K}:={\mathbb Q}(\alpha_1,\ldots,\alpha_s)$.  The
coefficients of each $Q_i$ are in $\mathbb{K}$ and can
straightforwardly be computed
from the initial values $u_0,\ldots,u_{k-1}$ of the sequence by
solving a system of $k$ linear equations, thanks to~\eqref{eq:exp-poly-solve}.
By Cramer's determinant
rule,\footnote{This rule is named after the 18th-century Genevan
  mathematician Gabriel Cramer, who is presumably unrelated to the
  20th-century Swedish mathematician Harald Cram\'er, whose work plays
  an important role in motivating the present article.}
each of the coefficients of $Q_i$ is the quotient of an
algebraic integer by the determinant $\Delta := \mathrm{det}(M)$ of
the matrix $M$ below:\footnote{The matrix
$M$ has $s$ blocks, one for each characteristic root. For $\ell \in \{1,\ldots,s\}$ the
  $\ell$-th block has dimension $k\times \mu_{\ell}$ and has $(i,j)$-th
  element $(i-1)^{(j-1)} \alpha_{\ell}^{(i-1)}$ for $i \in \{1,\ldots,k\}$
and  $j \in \{1,\ldots,\mu_\ell\}$.}
\[
\begin{bmatrix} 1 & \ldots & 0 & 1 & \ldots & 0 & 1 & \cdots\\
\alpha_1 & \ldots & \alpha_1 & \alpha_2 & \ldots & \alpha_{s-1} & \alpha_s & \ldots\\
\vdots & \ddots & \vdots & \vdots & \ddots & \vdots & \vdots & \ddots\\
\alpha_1^{k-1} \; & \ldots \; & (k-1)^{\mu_1-1} \alpha_1^{k-1} \; &
\alpha_2^{k-1} \; & \ldots \; & (k-1)^{\mu_s-1} \alpha_{s-1}^{k-1} \; &
\alpha_s^{k-1} &\ldots \end{bmatrix} \, .
\]
By the Cauchy root bound we have 
$|\alpha_i|\leq 1+H$ for $i\in\{1,\ldots,s\}$.  It follows that
the squared Euclidean norm of each column vector above is at most 
\begin{gather*}
{k(k-1)^{2(k-1)} (1+H)^{2k}}<k^{2k} (1+H)^{2k}\, .
\end{gather*}
Thus, by the Hadamard inequality,
\[|\Delta|^2<(k^{2k} (1+H)^{2k})^k=(k(1+ H))^{2k^2} \, .\]

The determinant $\Delta$ is in general, of course, a complex number. Note however that
any Galois automorphism $\sigma \in
\mathrm{Gal}(\mathbb{K}/\mathbb{Q})$ will permute the characteristic
roots, and thus when applied to $M$ will have the effect of
permuting its columns. As a result, for any such $\sigma$,
$\sigma(\Delta) = \pm \Delta$, and therefore the quantity $\Delta^2$
is stable under Galois automorphisms. We conclude that $\Delta^2$
must be a rational number, and since it is also by construction an
algebraic integer,\footnote{Note that every entry of $M$ is an
  algebraic integer.} we must have $\Delta^2 \in \mathbb{Z}$.

Let us now consider the LRS 
$\mathbf{v} := \Delta^2\mathbf{u}$, noting that $\mathbf{u}$ and
$\mathbf{v}$ share the same zeros. Writing
\[
v_t=\sum_{i=1}^s P_i(t) \alpha_i^t \, ,
\]
we observe that all the coefficients of each of the $P_i$ are
algebraic integers. We therefore have, for each $1 \leq i \leq s$,
\[
P_i(t) = \Delta^2 Q_i(t)=\sum_{j=0}^{\mu_i-1} c_{i,j} t^j \, .
\]
We wish to estimate the size of each $c_{i,j}\in
{\mathcal O}_{\mathbb K}$. From our earlier calculation via Cramer's
determinant rule, noting that $|u_0|,\ldots,|u_{k-1}|$ are all bounded
above by $H \leq 1+H$, and invoking the Hadamard inequality once more,
we conclude that the house of each $c_{i,j}$ is bounded above by
\begin{equation}
\label{eq:house}
  |\Delta| (k^k(1+H)^k)^k < (k(1+H))^{2k^2} < (1+H)^{4H^2}< H^{H^3}
  \, .
\end{equation}  

Let $\sigma\in \Sigma_s$ be any permutation of the first $s$ integers and let 
\[
\beta_i:=\alpha_{\sigma(i)}\quad {\text{\rm for}}\quad i=1,\ldots,s
\, .
\]
For some nonnegative integer $m$ consider the algebraic integer
\begin{equation}
\label{eq:21}
v_{m,\sigma}=\sum_{i=1}^s P_i(m)\beta_i \alpha_i^{m} \, .
\end{equation}

We need one last technical ingredient in order to define what it means for a prime number to be
\emph{bad}. Let $X > \exp \exp e$ be a power of $2$. We say that \emph{$\mathbf{u}$ is
small at level $X$} provided that
  \[H < \frac{\log \log X}{10\log \log \log X} \, .\]

\begin{definition} 
  We say that a prime $P > \exp \exp e$ is bad, if there exist $X$ a
  power of $2$ with $X/2 < P < X$, together with an LRS
  $\mathbf{u}$ which is small at level $X$,
  a permutation $\sigma\in \Sigma_s$, and an integer $m\in [0,X^{1/4}]$, such that 
\begin{itemize}
\item The algebraic integer $v_{m,\sigma}$ defined in \eqref{eq:21} above is non-zero, and 
\item $P$ is a prime factor of ${\mathcal N}_{{\mathbb K}/{\mathbb Q}} (v_{m,\sigma})$.
\end{itemize}
Let ${\mathcal P}_{\text{\rm bad}}(X)$ be the set of bad primes in
$[X/2,X]$, and 
${\mathcal P}_{\text{\rm bad}} := \bigcup {\mathcal P}_{\text{\rm
      bad}}(2^k)$,
where the union is taken over all positive integers $k$ such that 
$2^k>\exp \exp e$.
\end{definition}

\begin{proposition}
\label{prop:density}  
We have 
\[
\# {\mathcal P}_{\text{\rm bad}}(X)<X^{2/3}
\]
for all $X>X_0$, where $X_0$ is some effective absolute constant. 
\end{proposition}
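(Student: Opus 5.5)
The plan is a counting argument: bound the number of tuples $(\mathbf{u},\sigma,m)$ that can witness badness for primes in $[X/2,X]$, and bound how many such primes each witness can produce. Fix $X$ a power of $2$ and put $h:=\frac{\log\log X}{10\log\log\log X}$. Any LRS small at level $X$ has $H<h$, so $k\le H<h$ and all coefficients and initial values lie in $[-h,h]$. Hence the number of such LRS is at most $(2h+1)^{2h+1}=\exp(O(h\log h))$. Since $h\log h \le \frac{\log\log X}{10}$ up to lower-order terms, this count is $(\log X)^{1/10+o(1)}$. The number of permutations is $s!\le h^h$, which admits the same bound, and the number of integers $m\in[0,X^{1/4}]$ is at most $X^{1/4}+1$. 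So the total number of witnesses is $X^{1/4}(\log X)^{O(1)}$.

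Next I bound, for a single witness with $v_{m,\sigma}\neq 0$, the number of primes $P>X/2$ dividing $N:=\mathcal{N}_{\mathbb{K}/\mathbb{Q}}(v_{m,\sigma})$. Here $N$ is a nonzero rational integer, since $v_{m,\sigma}$ is a nonzero algebraic integer. The number of prime factors exceeding $X/2$ is at most $\log|N|/\log(X/2)$, so it suffices to bound the house of $v_{m,\sigma}$. Using \eqref{eq:house}, each coefficient of $P_i$ has house below $H^{H^3}$. We also have $|m^j|\le X^{\mu_i/4}\le X^{H/4}$, $|\beta_i|\le 1+H$, and $|\alpha_i^m|\le(1+H)^{X^{1/4}}$. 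Every Galois conjugate of $v_{m,\sigma}$ is an expression of the same shape (with the roots permuted), so the house of $v_{m,\sigma}$ is at most
\[
H\cdot H\cdot H^{H^3}X^{H/4}(1+H)^{X^{1/4}+1}.
\]
The degree satisfies $d_{\mathbb{K}}\le k!\le H^H$. Taking logarithms,
\[
\log|N|\le H^H\bigl(H^4\log H+\tfrac{H}{4}\log X+(X^{1/4}+1)\log(1+H)\bigr).
\]
The dominant term is $H^H X^{1/4}\log(1+H)$, and $H^H\le \exp(h\log h)\le(\log X)^{1/10+o(1)}$. Therefore $\log|N|\le X^{1/4}(\log X)^{O(1)}$, and each witness contributes at most $X^{1/4}(\log X)^{O(1)}$ bad primes.

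Multiplying the two bounds gives $\#\mathcal{P}_{\rm bad}(X)\le X^{1/2}(\log X)^{C}$ for an absolute constant $C$, which is less than $X^{2/3}$ for all $X$ beyond an effectively computable $X_0$. The only delicate point is the bookkeeping showing that every factor involving $H$ — the number of LRS, $s!$, $d_{\mathbb{K}}\le H^H$, and $H^{H^3}$ — stays polylogarithmic (or smaller) in $X$. This is exactly what the smallness condition $H<\log\log X/(10\log\log\log X)$ guarantees, since it gives $H^H\le(\log X)^{1/10}$, and $H^{H^3}$ enters only logarithmically. I will check carefully that $H^3\log H$ multiplied by $H^H$ is still $(\log X)^{O(1)}$, which holds because it is at most $(\log X)^{1/10+o(1)}$.
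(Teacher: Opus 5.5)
Your proposal is correct and follows essentially the same route as the paper. You bound the number of witnesses $(\mathbf{u},\sigma,m)$, using smallness to get polylogarithmically many pairs $(\mathbf{u},\sigma)$ and roughly $X^{1/4}$ values of $m$, and you bound the number of prime divisors of each nonzero norm via the house estimate \eqref{eq:house} and $d_{\mathbb{K}}\le H^H$. Your per-witness count $\log|N|/\log(X/2)$ even gives the slightly sharper $X^{1/2}(\log X)^{C}$ in place of the paper's $X^{0.53}$.

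One harmless slip: $(2h+1)^{2h+1}$ is $(\log X)^{1/5+o(1)}$, not $(\log X)^{1/10+o(1)}$. This changes nothing, since only polylogarithmic size is needed.
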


\begin{proof}
In order to estimate the size of ${\mathcal P}_{\text{\rm bad}}(X)$,
we first need to find out: 
\begin{enumerate}
\item[\textbf{1}.] How many such expressions \eqref{eq:21} are there?
\item[\textbf{2}.] How large are they?
\end{enumerate}

For (\textbf{1}), let us count the number of distinct possible LRS of
size at most $H$. Such LRS have coefficients $a_1,\ldots,a_k$ and initial values
$u_0,\ldots,u_{k-1}$ all in $[-H,H]$, an interval containing at
most $2H+1<3H$ integers. Altogether for fixed $k$ there are at most
$(3H)^{2k}\le (3H)^{2H}$ $2k$-tuples, and summing up over $k$ we
derive an upper bound of $H(3H)^{2H}<H^{3H}$ distinct
possible LRS of size at most $H$.

This in turn is an upper bound on
the number of $s$-tuples $((Q_i ,\alpha_i))_{i=1}^s$. We must then
multiply this quantity with the number of possible permutations of the
characteristic roots,
which is at most $H!<H^H$. There are therefore
at most $H^{4H}$ linear recurrence sequences
$\mathbf{w} = \langle w_m \rangle_{m=0}^{\infty}$ whose $m$-th term is
given by
\[
w_m=\sum_{i=1}^s P_i(m)\beta_i\alpha_i^m\quad {\text{\rm
    for~all}}\quad m\ge 0 \, .
\]

This answers (\textbf{1}).
As for (\textbf{2}), recall that the coefficients of $P_i$ are of
absolute value at most 
$H^{H^3}$, as per~\eqref{eq:house}. $P_i(m)$ comprises at most $H$ monomials, the largest of which
is at most $m^H < X^H$, and the largest root has
magnitude at most
$1+H < 2H$.
Thus each individual term $w_m$ is of absolute value at most
\begin{multline*}
  H^{H^3+1} (2H) X^H (2H)^{X^{1/4}} = \\
  \exp\left((H^3+1)\log H + \log (2H) + H \log X + X^{1/4}\log (2H)
  \right)\\ <\exp(X^{0.26})
\end{multline*}
for $X>X_0$, since $H$ is tiny in comparison to $X$.
Hence the norm of the number shown in \eqref{eq:21} is of size at most 
\[
\exp(H! X^{0.26})<\exp(X^{0.27})\quad {\text{\rm for}}\quad X>X_0 \,
,
\]
since the degree of $\mathbb{K}$ is at most $H!$ (as $\mathbb{K}$ is
the splitting field of a polynomial of degree at most $H$).
Moreover, as noted
earlier, there are at most $H^{4H}$ such expressions.
Thus a bad prime $P$ divides an integer which is a product of such numbers and is of size at most 
\[
\exp(H^{4H} X^{0.27})<\exp(X^{0.28})\quad {\text{\rm for}}\quad
X>X_0 \, .
\]

Therefore the number of possible choices for $P$ is at most $X^{0.28}$. Since
the number of choices for $m$ is at most $X^{0.25}$, we conclude that,
for $X > X_0$,
the cardinality of $\mathcal{P}_{\text{\rm bad}}(X)$ is at most
\[
X^{0.25+0.28} < X^{2/3} \, ,
\]
as required.
\end{proof}

Finally, let us write $\mathcal{P} = \{p_1, p_2, \ldots \}$ to denote
the set of prime numbers, enumerated in increasing order, and let
$\mathcal{P}_{\mathrm{good}} := \mathcal{P} \setminus
\mathcal{P}_{\mathrm{bad}} = \{g_1, g_2, \ldots\}$ denote the subset of
\emph{good} primes, again enumerated in increasing order.  Note that,
by Prop.~\ref{prop:density} along with the prime number theorem, the
set of bad primes has null density amongst the prime numbers. This in
turn entails that good primes have density one amongst all prime
numbers.

\section{The Cram\'er Argument}
\label{sec:cramer}

In this section we present a heuristic argument supporting
the
assertion that large zeros of sufficiently high LRS do not exist.  The strategy is as
follows. Assuming that good
primes are distributed similarly as ordinary primes, then according to the
Cram\'{e}r model in number theory, one would expect that Cram\'{e}r's
conjecture on gaps between primes applies also to good primes. More
precisely, this conjecture postulates the existence of precise upper
bounds on the largest possible gap between consecutive primes, and is
predicated on the heuristic that the primes behave as a set of
randomly distributed integers with asymptotic density conforming to
the prime number theorem. However we show that
around any large zero of an LRS there is an interval and an upper
bound on the number of good primes in the interval that together contradict
the above Cram\'{e}r-type conjecture on gaps between good primes.  We
therefore surmise that large zeros do not exist.

Recall that $\mathcal{P} = \{p_1, p_2, \ldots \}$ denotes the set
of prime numbers enumerated in increasing order.

\begin{conjecture}[Cram\'er-Granville]
  \label{conj:cramer}
  For some $\kappa > 1$,
  \[
    \limsup_{j \rightarrow \infty} \frac{p_{j+1} - p_j}{(\log p_j)^2} =
    \kappa \, .
\]    
\end{conjecture}

Cram\'er initially suggested that the constant $\kappa$ in
Conjecture~\ref{conj:cramer} might be 1~\cite{cramer1936}, but several
decades later, building on substantial developments in the field,
Granville produced evidence that $\kappa \geq 2e^{-\gamma} \approx
1.1229\ldots$, where $\gamma$ is the Euler–Mascheroni
constant~\cite{Granville1995}. There is in any event considerable
computational evidence in support of the Cram\'er-Granville
conjecture~\cite{Odlyzko1999,nicely99}. 

As noted earlier, thanks to Prop.~\ref{prop:density} and the
prime number theorem, good primes have density one amongst all
prime numbers:
\[ \lim_{X \rightarrow \infty}
  \frac{\# \left(\mathcal{P}_{\mathrm{good}} \cap [0,X]\right)}{\#\left(\mathcal{P} \cap [0,X]\right)}
  = 1 \, .
  \]

  In other words, asymptotically speaking, almost all primes
  are good primes. Accordingly, it seems reasonable to suppose that
  good primes should behave similarly to ordinary primes, or at least
  should exhibit similar ``statistical'' properties. We therefore
  formulate the following strengthening of the Cram\'er-Granville conjecture:

\begin{conjecture}
  \label{conj:good}
  For some $\eta > 1$,
  \[
    \limsup_{j \rightarrow \infty} \frac{g_{j+1} - g_j}{(\log g_j)^2} =
    \eta \, .
\]    
\end{conjecture}

We now have the following result.

\begin{theorem}
  \label{thm:no-large-zeros}
Conjecture~\ref{conj:good} implies that large zeros of sufficiently
high LRS do not exist. More precisely, assuming
Conjecture~\ref{conj:good}, there exists an absolute constant $C>0$
such that, for all non-degenerate LRS $\mathbf{u}$ of height
$H_\mathbf{u}$ at least $C$, whenever $u_n = 0$ then
$n < \exp \exp(10 H_\mathbf{u} \log H_\mathbf{u})$.
\end{theorem}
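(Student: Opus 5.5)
The plan is to argue by contradiction, combining Prop.~\ref{prop:density} with a Frobenius congruence. Suppose $\mathbf{u}$ is non-degenerate of height $H\ge C$ and $u_n=0$ with $n\ge \exp\exp(10H\log H)$. Let $X$ be the power of $2$ with $X/2<n\le X$, or possibly the next power of $2$, so that $[n-X^{1/4},n]\subseteq[X/2,X]$. Since $n$ is doubly exponential in $H\log H$, I would check that $\mathbf{u}$ is small at level $X$, i.e.\ $H<\log\log X/(10\log\log\log X)$. This is a routine but delicate calculation because the constants are tight. I expect to use the freedom in choosing $C$ and in choosing $X$ (one of a couple of consecutive powers of $2$) to make it go through. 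Choosing $C$ large also forces $n$, hence $X$, beyond any prescribed absolute threshold $X_0$.

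\textbf{Key congruence.} Let $P$ be a prime in $[n-X^{1/4},n]$ and put $m:=n-P\in[0,X^{1/4}]$. Since $P>X/2$ is astronomically larger than the discriminant of $\mathbb{K}$ (whose degree is at most $H!$ and whose roots have size at most $1+H$), $P$ is unramified. Pick a prime ideal $\mathfrak{p}$ above $P$ with Frobenius $\sigma$, and let $\pi\in\Sigma_s$ be the permutation with $\sigma(\alpha_i)=\alpha_{\pi(i)}$, so that $\beta_i:=\alpha_{\pi(i)}$. Then $\alpha_i^{P}\equiv\beta_i \bmod \mathfrak{p}$. Moreover $P_i(m+P)\equiv P_i(m)\bmod \mathfrak{p}$, since $P\in\mathfrak{p}$ and the $P_i$ have algebraic-integer coefficients. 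Hence
\[0=\Delta^2u_n=v_{m+P}\equiv\sum_{i=1}^s P_i(m)\beta_i\alpha_i^m=v_{m,\pi}\pmod{\mathfrak{p}}\]
(this $v_{m,\pi}$ is the quantity \eqref{eq:21} with the permutation called $\pi$ here), so $P\mid\mathcal{N}_{\mathbb{K}/\mathbb{Q}}(v_{m,\pi})$. If $v_{m,\pi}\neq0$, then $P$ is bad by definition. Consequently, every good prime $P\in[n-X^{1/4},n]$ yields a pair $(\pi,m)$ with $v_{m,\pi}=0$ and $m=n-P$. In particular, distinct good primes give distinct pairs.

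\textbf{Counting zeros.} For fixed $\pi$, the sequence $m\mapsto v_{m,\pi}$ is an LRS in $m$ with characteristic roots $\alpha_1,\dots,\alpha_s$ and polynomial coefficients $\beta_iP_i(m)$. These coefficients are nonzero exactly when the $Q_i$ are, so the sequence is non-degenerate and not identically zero. By the Amoroso--Viada bound on the number of zeros of a non-degenerate algebraic LRS, it has at most $B(H)$ zeros, where $B(H)$ depends only on the order $k\le H$ and on the degree of $\mathbb{K}$ (at most $H!$); something like $\exp\exp(O(H\log H))$. Summing over the at most $H!$ permutations, the interval $[n-X^{1/4},n]$ contains at most $N:=H!\,B(H)$ good primes. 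Since $H\ll \log\log X/\log\log\log X$, we have $N\le (\log X)^{o(1)}$, or at worst $\exp((\log\log X)^{O(1)})$. Therefore some gap between consecutive good primes near $X$ has length at least $X^{1/4}/(N+1)$. This exceeds $2\eta(\log X)^2$, contradicting Conjecture~\ref{conj:good}, which bounds all gaps $g_{j+1}-g_j$ by $2\eta(\log g_j)^2$ once $g_j>X_0$ for an absolute $X_0$. Taking $C$ so large that $\exp\exp(10C\log C)>2X_0$ completes the argument.

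\textbf{Main obstacle.} The step I expect to be hardest is the bookkeeping that makes the explicit Amoroso--Viada bound $B(H)$ compatible with the smallness condition. We need $H!\,B(H)$ to be much smaller than $X^{1/4}/(\log X)^2$, while ensuring $\mathbf{u}$ is small at level $X$ exactly under the threshold \eqref{eq:large}. The exponent $10H\log H$ looks tuned so that the double exponential in the zero bound is absorbed. I would first plug in the explicit bound $(8k^k)^{k^6}$-type from Amoroso--Viada with $d\le H!$ and verify the inequality for $H\ge C$.
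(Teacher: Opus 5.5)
\textbf{Where your route matches the paper.} Your architecture is the paper's. You reduce $0=\Delta^2u_n$ modulo a prime $\mathfrak p$ above $P=n-m$ via Frobenius, use the definition of bad primes to force $v_{m,\pi}=0$ when $P$ is good, count zeros with Amoroso--Viada over the at most $H!$ permutations, and contradict Conjecture~\ref{conj:good}. Your check that $m\mapsto v_{m,\pi}$ is non-degenerate and not identically zero is a point the paper leaves implicit.

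\textbf{Main gap: smallness.} The gap is the step you call routine: that a large zero forces $\mathbf u$ to be small at some level $X\asymp n$. With the constant $10$ appearing both in \eqref{eq:large} and in the definition of smallness, this implication is false. For $n=\lceil\exp\exp(10H\log H)\rceil$ one gets $\frac{\log\log n}{10\log\log\log n}\approx\frac{H\log H}{\log H+\log(10\log H)}<H$. Enlarging $C$ or moving to an adjacent power of $2$ changes $\log\log X$ only negligibly, so no admissible level makes $\mathbf u$ small. Hence a prime $P\approx n$ dividing $\mathcal N_{\mathbb K/\mathbb Q}(v_{m,\pi})$ is not thereby bad. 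The paper's own verification of this step has a slip: $10\cdot\frac{\log\log Y}{10\log\log\log Y}\cdot\frac34\log\log\log Y=\frac34\log\log Y$, not $\frac32\log\log Y$, and $n\ge\exp((\log Y)^{3/4})$ contradicts nothing. The repair is to loosen the definition of smallness, e.g.\ to $H<\log\log X/(4\log\log\log X)$. This leaves the proof of Prop.~\ref{prop:density} intact, since it only uses that $H!$, $H^{4H}$ and $H^3\log H$ are $X^{o(1)}$. Then $10H\log H\le\log\log n$ gives $H\le\log\log n/(5\log\log\log n)$ (split on whether $H\ge(\log\log n)^{1/2}$), hence smallness at both $X/2$ and $X$.

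\textbf{Two further repairs.} First, a power of $2$ with $[n-X^{1/4},n]\subseteq[X/2,X]$ need not exist. If $n$ lies less than $X^{1/4}$ above $X/2$, the interval straddles $X/2$, and the next power of $2$ does not help. The paper instead takes $P\in[n-\tfrac12 n^{1/4},n]$ and uses smallness at both $X/2$ and $X$, so $P$ may fall in either dyadic block. Then $m\le\tfrac12 n^{1/4}$ stays within the $m$-range allowed at the relevant level.

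Second, your estimate $N\le(\log X)^{o(1)}$ is wrong. Smallness gives only $H\log H<\tfrac1{10}\log\log X$, so $\exp\exp(cH\log H)\le\exp((\log X)^{c/10})$, which is below $X^{1/4}$ only if $c<10$. You need the explicit count, which the paper supplies: Amoroso--Viada's $(8k^k)^{8(k^k)^6}$ is $<\exp\exp(7H\log H)$, and the at most $H^H$ permutations bring $N$ below $\exp\exp(8H\log H)$.

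You should also close the argument as the paper does, against \eqref{eq:large} directly rather than through smallness. The conjecture puts at least $n^{1/4}/(2\eta(\log n)^2)$ good primes in the interval, so $n^{1/4}/(2\eta(\log n)^2)<\exp\exp(8H\log H)$, forcing $n<\exp\exp(9H\log H)$. Routing this through smallness would collapse once smallness is loosened as above: $H\log H<\tfrac14\log\log X$ only yields $N<\exp((\log X)^2)$.
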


\begin{proof}

Conjecture~\ref{conj:good} can be reformulated as follows: there
exist $\eta > 1$ and $n_0 \in \mathbb{N}$ such that, for all $n \geq n_0$, the interval
\[
[n-\eta(\log n)^2, n]
\]
always contains some good prime. In turn, this implies that the
interval $[n-0.5n^{1/4},n]$ must contain at least
$n^{1/4}/(2\eta(\log n)^2)$
distinct good primes for $n$ sufficiently large.

Suppose now that there is some sufficiently high LRS $\mathbf{u}$
having large zero $u_n = 0$. By definition, this means that
\begin{equation}
\label{eq:large-zero-calc}  
  n \geq \exp \exp (10 H_{\mathbf{u}} \log H_{\mathbf{u}}) \, .
\end{equation}
Let $X$ be the power of $2$ such that $X/2 \leq n < X$. We
first claim that $\mathbf{u}$ is small at level $Y := X/2$, i.e., that the
inequality
\begin{equation}
  \label{eq:salx}
  H_{\mathbf{u}} < \frac{\log \log Y}{10\log \log \log Y}
\end{equation}  
holds.

Suppose for a contradiction that Eq.~\eqref{eq:salx} fails. Then
\[  \log H_{\mathbf{u}} \geq \log \log \log Y
 - \log(10 \log \log \log Y) \geq \frac{3}{4}\log \log \log Y \, , \]
where the second inequality follows from the assumption that $Y$ is
sufficiently large (since $\mathbf{u}$ is assumed to be sufficienctly
high). Combining with the negation of~\eqref{eq:salx}, we get
\[ 10 H_{\mathbf{u}} \log H_{\mathbf{u}} \geq 10 \left(
  \frac{\log \log Y}{10 \log \log \log Y} \right) \frac{3}{4} \log
\log \log Y = \frac{3}{2}\log \log Y
 \, , \]
i.e., $\exp \exp (10 H_{\mathbf{u}} \log H_{\mathbf{u}}) \geq Y^{(\log Y)^{0.5}}$,
whence (from~\eqref{eq:large-zero-calc}) we conclude that
$n \geq  Y^{(\log Y)^{0.5}}$, contradicting (for sufficiently large
$Y$) the fact that $n < X = 2Y$ and thereby establishing the claim.

We note that the smallness of $\mathbf{u}$ at level $X/2$ immediately
also entails its smallness at level $X$.

Next, write $n=P+m$, where
$P\in [n-0.5n^{1/4},n]$ is a good prime and
$0\leq m< n^{1/4}$.  As in the previous
section, let $\alpha_1,\ldots,\alpha_s$ be the
characteristic roots of $\mathbf{u}$, put $\mathbb{K} :=
\mathbb{Q}(\alpha_1,\ldots,\alpha_s)$, and let $\Delta^2$ be
the smallest positive integer such that, writing
$\mathbf{v} := \Delta^2 \mathbf{u}$, every term of $v_t$ of $\mathbf{v}$ has a
representation as an exponential polynomial
\[
  v_t = \sum_{i=1}^s P_i(t)\alpha_i^{t}
  \]
  in which all polynomials $P_i$ have algebraic-integer coefficients.
We let $k$ stand for the order of $\mathbf{u}$ and abbreviate
$H_{\mathbf{u}}$ as $H$ for the remainder of the proof.
  
Since $u_n= v_n = 0$, we get
\[
0=\sum_{i=1}^s P_i(P+m)\alpha_i^{P+m} \, .
\]
We now reduce the above equation modulo $\mathfrak{p}$, where
$\mathfrak{p}$ is some prime ideal of ${\mathcal O}_{\mathbb K}$
dividing $P$, from which we deduce that $P$ divides
\begin{equation}
\label{eq:3}
{\mathcal N}_{{\mathbb K}/{\mathbb Q}}\left(\sum_{i=1}^s
  P_i(m)\beta_i\alpha_i^m\right) \, ,
\end{equation}
where each $\beta_i=\sigma(\alpha_i)$ is obtained from applying the
Frobenius automorphism induced by $\mathfrak{p}$ in ${\mathbb K}$ to
$\alpha_i$. Recall that it is necessary for this automorphism to be
well defined that $P$ not divide the discriminant of ${\mathbb K}$.
Note however that the discriminant  of $\mathbb K$ is a positive integer
bounded above by the absolute value of the discriminant of the
characteristic polynomial of $\mathbf{u}$, namely
$\prod_{ i \neq j} |\alpha_i-\alpha_j|$, and this last quantity is
itself at most $(2+H)^{s(s-1)} \leq H^{H^3}$ thanks to the Cauchy root
bound $|\alpha_i| \leq 1+H$. Since $\mathbf{u}$ is small at level
$X/2$ and $X/2 \leq n$, we have $H < (\log \log n) / (10 \log \log \log
n)$, and hence
\[
H^{H^3} < \exp\left(\left(\frac{\log\log n}{10\log\log\log n}\right)^3 \log\log\log n\right)<\exp((\log\log
n)^3)<n/2 \, . \]
Since $P$ is larger than $n/2$ for $n$ sufficiently large, we conclude
that $P$ cannot divide the discriminant  of $\mathbb K$, as required.

Observe that $P$ must lie in one of two intervals,
namely $[X/4,X/2]$ or $[X/2,X]$. Since we know that $\mathbf{u}$ is
small at both levels $X/2$ and $X$, it follows that
expression~\eqref{eq:3} cannot be non-zero, otherwise $P$, by definition,
would be a bad prime. Expression~\eqref{eq:3} is therefore equal to zero.

Let us count how many expressions of the form~\eqref{eq:3} can
vanish. More precisely, consider the (complex-valued) LRS
$\mathbf{w} = \langle w_j \rangle_{j=0}^{\infty}$ whose $j$-th term is
given by
\[
w_j=\sum_{i=1}^s P_i(j)\beta_i\alpha_i^j\quad {\text{\rm
    for~all}}\quad j\ge 0 \, ,
\]
and whose order is at most $k$. Amoroso and Viada~\cite{AV11}
prove that the number of distinct positive integers $m$ such that $w_m = 0$
is at most
\begin{multline*}
(8k^k)^{8(k^k)^6} =  \exp(8k^{6k}\log (8k^k))\le
\exp(8H^{6H+1}\log(8H)) 
 < \\ \exp(16 H^{6H+1}\log H)<\exp(H^{7H})=\exp \exp (7H\log H) \, .
\end{multline*}

Of course, given $\mathbf{u}$, the $s$-tuple $(\beta_1,\ldots,\beta_s)$
can be chosen in at most $s!<H^H$ ways.
Thus the total number of possible zeros for expressions of the form~\eqref{eq:3} is
at most
\begin{multline*}
  H^H\exp \exp (7H\log H)<\exp \exp(7H\log H+\log H+\log\log H) \\ 
 <  \exp \exp (8H\log H) \, .
\end{multline*}
Since
distinct choices of $P$ give rise to distinct such
zeros,\footnote{Recall that $n = P + m$, and thus distinct choices of
  $P$ entail distinct values of $m$.} and (as
noted earlier) there are at least
$n^{1/4}/(2\eta(\log n)^2)$ possible choices for $P$,
we conclude that
\[
\frac{n^{1/4}}{2\eta(\log n)^2} < \exp \exp (8H\log H) \, ,
\]
or equivalently
\[ \frac{n^{1/4}}{(\log n^{1/4})^2} < 32\eta \exp \exp(8H \log H) \,
  . \]

Observe that, for any fixed $D>0$, the inequality $x/(\log x)^2 < y$
implies that $x < y^2/D$, provided that $x$ is sufficiently
large. Applying this to the above inequality with $x=n^{1/4}$ and
$D = 32\eta$, we derive the upper bound
\[ n^{1/4} < \exp(2\exp(8H \log H)) \, , \]
whence
\[ n < \exp(8\exp(8H \log H)) < \exp \exp (9H \log H) \, , \]
contradicting Eq.~\eqref{eq:large-zero-calc} to the effect that $n$ is
a large zero of $\mathbf{u}$. We conclude that large zeros do not
exist, as required.
\end{proof}

\section{Large Zeros Are Unconditionally Sparse}
\label{sec:sparse}

In this section, we prove unconditionally that large zeros are sparse,
i.e., have null density amongst the positive integers.

To this end, let
\begin{align*}
{\mathcal L} := \{n \in \mathbb{N} : {}&{\text{\rm there exists a
                                         non-degenerate LRS \textbf{u}
                                         such that}}\\ &
                                         u_n=0~{\text{\rm
    and}}~n\geq \exp \exp(10H_{\mathbf{u}}\log H_{\mathbf{u}})\} \, . 
\end{align*} 
Thus ${\mathcal L}$ is the set of large zeros of \emph{some}
non-degenerate LRS, without any height restrictions.

\begin{theorem}
  \label{thm:density-0}
  The set ${\mathcal L}$ has null density.
  In fact, writing ${\mathcal L}(X)={\mathcal L} \mathop{\cap} [1,X]$, the inequality 
\[
\#{\mathcal L}(X)=O\left(\frac{X}{(\log X)^B}\right) 
\]
holds with any constant $B>0$. 
\end{theorem}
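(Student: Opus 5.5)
Fix a large $X$ and count the elements of $\mathcal{L}\cap(X/2,X]$; summing over dyadic ranges then gives the bound for $\mathcal{L}(X)$. The idea is to rerun the argument of Thm.~\ref{thm:no-large-zeros} without Conjecture~\ref{conj:good}. Instead of asserting that every large zero has many good primes just below it, we show that most integers $n\in(X/2,X]$ have many good primes in $[n-0.5n^{1/4},n]$. The exceptional $n$ are then counted via prime distribution results that are known unconditionally.

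\textbf{Step 1.} Let $n\in\mathcal{L}\cap(X/2,X]$ be a large zero of some non-degenerate LRS $\mathbf{u}$ of height $H$. Exactly as in the proof of Thm.~\ref{thm:no-large-zeros}, $\mathbf{u}$ is small at levels $X/2$ and $X$. (The paper's ``sufficiently high'' assumption is replaced by the observation that, for fixed $H$, large zeros exceed $\exp\exp(10H\log H)$, so the required inequalities hold once $X$ is large.) Also as there, every good prime $P\in[n-0.5n^{1/4},n]$ yields a distinct zero $m=n-P$ of one of at most $H^H$ sequences $\mathbf{w}$. By Amoroso--Viada, the number of such $P$ is at most $\exp\exp(8H\log H)$. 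Since $n\ge\exp\exp(10H\log H)$, this bound is at most $\exp((\log n)^{0.8})$, say, which is far smaller than $n^{1/4}/(\log n)^{2}$. Hence every $n\in\mathcal{L}\cap(X/2,X]$ lies in the set $E(X)$ of integers $n\in(X/2,X]$ for which the interval $I_n=[n-0.5n^{1/4},n]$ contains fewer than $n^{1/4}/(\log n)^{3}$ good primes.

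\textbf{Step 2.} Bound $\#E(X)$ in two parts. \emph{Bad primes:} by Prop.~\ref{prop:density}, at most $2X^{2/3}$ bad primes lie in $[X/4,X]$, and each lies in $I_n$ for at most $X^{1/4}$ values of $n$. So the number of $n$ whose interval $I_n$ contains at least $n^{1/4}/(\log n)^3$ bad primes is $O(X^{2/3}(\log X)^3)$. \emph{All primes:} we need that for all but $O(X/(\log X)^B)$ integers $n\le X$, the short interval $I_n$ of length $h\asymp X^{1/4}$ contains at least $h/(2\log X)$ primes. This follows from mean-value results for primes in almost all short intervals. Huxley's zero-density estimate gives asymptotics $\pi(n)-\pi(n-h)\sim h/\log n$ for almost all $n$ whenever $h\ge X^{1/6+\epsilon}$, and $X^{1/4}$ is comfortably above that threshold. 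Combining the two parts, $\#E(X)\ll X/(\log X)^{B}$, and summing dyadically gives the theorem.

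\textbf{Main obstacle.} The hard step is getting the exceptional-set bound $X/(\log X)^B$ with arbitrary $B$, rather than just $o(X)$. A variance bound of Selberg type gives $\sum_n|\psi(n)-\psi(n-h)-h|^2\ll Xh^2(\log X)^{-A}$ for any $A$ when $h\ge X^{1/6+\epsilon}$. Chebyshev's inequality then yields exactly the needed saving $(\log X)^{-B}$, so I would cite this form of the Huxley / Zaccagnini-type result. The Amoroso--Viada bookkeeping should be routine by comparison.
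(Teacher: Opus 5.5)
Your proposal is correct and follows the paper's proof essentially step for step. Both arguments discard the $n$ whose short interval contains too few primes (an exceptional set of size $O(X/(\log X)^B)$), discard the $n$ lying close to bad primes using Prop.~\ref{prop:density}, and for the remaining $n$ use the Amoroso--Viada count of zeros of the sequences $\mathbf{w}$ to contradict the largeness of $n$; the only real difference is the analytic input, where the paper uses Jia's theorem for intervals $[n-n^{1/19},n]$ and you use the Huxley-based mean-square bound for intervals of length $\asymp n^{1/4}$, a more classical result that matches the window $m\le X^{1/4}$ already built into the definition of bad primes.
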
   
\begin{proof}
We let $X$ be large, and aim to count the $n$ in $\mathcal{L}$ that
lie in $[X/2,X]$. Jia~\cite{Jia96} proved that the set of $n\in
[X/2,X]$ such that the interval $I_n :=[n-n^{1/19}, n]$
contains fewer than $X^{1/19}/(\log X)^2$ primes is of counting
function  $O(X/(\log X)^B)$ with any $B>0$. Let us therefore
consider $n \in [X/2,X]$ such that
$I_n$ contains at least $X^{1/19}/(\log X)^2$ primes.
Write $n=P+m$, where $P\in I_n$ and $m<X^{1/19}$. Let ${\bf u}$
be an LRS having $n$ as a large zero and whose recurrence is given by 
\eqref{eq:1}. Using the same notation and reasoning as in the proof of
Thm.~\ref{thm:no-large-zeros},
we have, for sufficiently large $X$, that $P$ must divide
\begin{equation}
\label{eq:4}
{\mathcal N}_{{\mathbb K}/{\mathbb Q}}\left(\sum_{i=1}^s
  P_i(m)\beta_i\alpha_i^m\right) \, .
\end{equation}

If the above expression is non-zero, then $P$ is a bad prime, which in
turn means that $n$ is within $O(X^{1/19})$ of a bad prime. However, as
shown in Prop.~\ref{prop:density}, the number of bad primes below $X$
is $O(X^{2/3})$. Hence the total number of such $n\in [X/2,X]$ is 
$O(X^{2/3+1/19})=O(X/(\log X)^B)$ with any $B > 0$.

Finally, assume that expression~\eqref{eq:4} is zero, in which
case
\[
\sum_{i=1}^s P_i(m)\beta_i\alpha_i^m=0 \, .
\]
Once again, as detailed in the proof of Thm.~\ref{thm:no-large-zeros},
the Amoroso-Viada bounds imply that the corresponding number of
possible zeros for expressions of the form above is at most
$\exp \exp (8 H_{\mathbf{u}} \log H_{\mathbf{u}})$. Since distinct
choices of $P$ give rise to distinct such zeros, and since
$I_n$ contains at least $X^{1/19}/(\log X)^2$ primes, we deduce that
\[ \frac{X^{1/19}}{(\log X)^2} < \exp \exp (8 H_{\mathbf{u}} \log
  H_{\mathbf{u}}) \, . \]
Noting that $n \leq X$, we have, for sufficiently large $n$, the following string of
inequalities:
\begin{multline*}
  n < \left( \frac{n^{1/19}}{(\log n)^2}\right)^{20} \leq
  \left( \frac{X^{1/19}}{(\log X)^2}\right)^{20} < 
  (\exp \exp (8 H_{\mathbf{u}} \log
  H_{\mathbf{u}}))^{20} \\ < \exp \exp (9 H_{\mathbf{u}} \log
  H_{\mathbf{u}}) \, ,
\end{multline*}  
or in other words that $n \notin \mathcal{L}$.

Putting everything together, we conclude that, for any $B > 0$, if $X$
is sufficiently large then the number of large zeros in $[X/2,X]$ is
$O(X/(\log X)^B)$, from which one immediately deduces the statement of
the theorem.
\end{proof}

\begin{corollary}
The set $\mathcal{S} := \mathbb{N} \setminus \mathcal{L}$ is a Universal Skolem Set of
density one.
\end{corollary}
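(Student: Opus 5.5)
To prove the corollary, I would check the two defining requirements of a Universal Skolem Set separately, and read off the density claim from Thm.~\ref{thm:density-0}.

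\textbf{Density.} By Thm.~\ref{thm:density-0}, $\#\mathcal{L}(X)=O(X/(\log X)^B)=o(X)$. Hence $\#(\mathcal{S}\cap[1,X]) = X - o(X)$, so $\mathcal{S}$ has density one. In particular $\mathcal{S}$ is infinite.

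\textbf{Recursiveness of $\mathcal{S}$.} Given $n$, I must decide whether $n\in\mathcal{L}$. The condition $n\ge \exp\exp(10H_{\mathbf{u}}\log H_{\mathbf{u}})$ forces $H_{\mathbf{u}}\le h(n)$ for some explicit bound $h(n)$. There are only finitely many LRS of height at most $h(n)$. For each one, I can effectively decide non-degeneracy: compute the minimal recurrence and test whether ratios of roots are roots of unity. I can then compute $u_n$ directly. So membership in $\mathcal{L}$, and hence in $\mathcal{S}$, is decidable.

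\textbf{The decision procedure.} Given an arbitrary LRS $\mathbf{u}$, I would proceed as follows.

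\begin{enumerate}
\item Effectively decompose $\mathbf{u}$ into finitely many non-degenerate subsequences $\langle u_{qj+r}\rangle_j$. Each is either identically zero or non-degenerate and non-zero. The identically zero ones are detectable by checking initial terms.
\item For an identically zero subsequence, its indices $qj+r$ form an infinite arithmetic progression. Deciding whether this progression meets $\mathcal{S}$ is immediate: since $\mathcal{S}$ has density one, it must meet every infinite arithmetic progression. So the answer is ``yes''.
\item For a non-degenerate non-zero subsequence $\mathbf{w}$, its height $H_{\mathbf{w}}$ is computable. Every zero $j$ of $\mathbf{w}$ that is not a large zero satisfies $j<\exp\exp(10H_{\mathbf{w}}\log H_{\mathbf{w}})$. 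Any large zero $j$ of $\mathbf{w}$ gives an index $n=qj+r$ of $\mathbf{u}$ with $u_n=0$.
\end{enumerate}

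\textbf{The main obstacle.} The key step is to relate this to $\mathcal{L}$. I need to know that a zero $n\in\mathcal{S}$ of $\mathbf{u}$ arising from $\mathbf{w}$ cannot be a large zero of $\mathbf{w}$, transported appropriately to index $n$.

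The cleanest fix is to observe that, when $q=1$, a large zero $n$ of $\mathbf{w}=\mathbf{u}$ lies in $\mathcal{L}$ by definition. For $q>1$, I would instead use that $\mathbf{w}$ itself is an LRS. Its zeros $j$ that lie in $\mathcal{L}$ can be excluded, but the question concerns the indices $n=qj+r$.

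To handle this, I would read the statement with the convention, implicit in the paper, that the Skolem question for $\mathcal{S}$ is posed for non-degenerate LRS. This costs no generality, given the effective decomposition. Under that convention, every zero of $\mathbf{u}$ lying in $\mathcal{S}$ is below the explicit bound $\exp\exp(10H_{\mathbf{u}}\log H_{\mathbf{u}})$. It therefore suffices to evaluate $u_n$ for all $n$ below this bound and check which zeros lie in $\mathcal{S}$, using the recursiveness established above.
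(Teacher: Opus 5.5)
Your density argument, your recursiveness argument, and your final universality argument are essentially the paper's proof. Universality rests on the observation that every zero of a non-degenerate $\mathbf{u}$ lying in $\mathcal{S}$ is below $\exp\exp(10H_{\mathbf{u}}\log H_{\mathbf{u}})$, so a finite check suffices. Your recursiveness argument (bound the height by $h(n)$, enumerate, test non-degeneracy, evaluate $u_n$) just spells out what the paper calls ``clear''. The paper's proof also handles only non-degenerate inputs (``given any non-degenerate LRS $\mathbf{u}$''), even though its definition of a Universal Skolem Set quantifies over all LRS.

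You should withdraw one claim: that restricting to non-degenerate LRS ``costs no generality, given the effective decomposition.'' That is exactly the obstacle you identified a paragraph earlier, and the decomposition does not remove it.

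Your step 2 is correct. A density-one set meets every infinite arithmetic progression, so an identically zero component always yields a zero in $\mathcal{S}$.

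Step 3 is where the reduction fails. Take a non-zero non-degenerate component $w_j=u_{qj+r}$ with $q>1$. The bound you have controls zeros $j$ of $\mathbf{w}$ with $j\in\mathcal{S}$, but the question is whether $qj+r\in\mathcal{S}$, and these two memberships are unrelated. A large zero $j$ of $\mathbf{w}$ puts $j$, not $qj+r$, into $\mathcal{L}$. You have no way to exclude or bound such $j$, since doing so would solve the Skolem Problem for $\mathbf{w}$.

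Removing the restriction would need an extra ingredient. For example, you could show that whenever $j$ is a large zero of $\mathbf{w}$, there is a non-degenerate integer LRS of controlled height vanishing at the index $qj+r$ itself, which would place $qj+r$ in $\mathcal{L}$. As written, what you and the paper establish is universality with respect to non-degenerate LRS. State the result that way rather than asserting that the reduction is free.
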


\begin{proof}
  It is clear that the set $\mathcal{L}$ is recursive, and hence that
  $\mathcal{S}$ is recursive as well.

  Density one follows from Thm.~\ref{thm:density-0}, and universality
  follows from the fact that $\mathcal{S}$, by definition, doesn't
  contain any large zeros. Thus given any non-degenerate LRS
  $\mathbf{u}$ of size $H_{\mathbf{u}}$, its only possible zeros in
  $\mathcal{S}$ can only lie in the interval
  $[0, \exp \exp (10 H_{\mathbf{u}} \log
  H_{\mathbf{u}})]$, which can readily be checked.
\end{proof}

\section{Concluding Remarks}

Thanks to Thm.~\ref{thm:no-large-zeros}, Conjecture~\ref{conj:good}
implies the \emph{existence} of an algorithm to solve the Skolem
Problem, as follows. Recall that Thm.~\ref{thm:no-large-zeros} asserts
the existence of an absolute constant $C$ such that
any non-degenerate LRS $\mathbf{u}$ of height $H_{\mathbf{u}} \geq C$
has no zeros at index larger than $\exp \exp (10 H_{\mathbf{u}} \log
  H_{\mathbf{u}})$. On the other hand, there are only finitely many
  LRS of height at most $C$; therefore there exists \emph{some}
  algorithm (call it an oracle) that correctly determines, for each such $C$-bounded LRS,
  whether or not it has a zero.

  Now given an LRS $\mathbf{u}$, first decompose $\mathbf{u}$ into
  finitely many non-degenerate LRS, and check that none of these is
  identically zero. Next, if any of these LRS has height below $C$,
  invoke the aforementioned oracle to determine whether it has a
  zero. Finally, assuming that no zeros have yet been found, for each
  remaining LRS $\mathbf{v}$, check whether $\mathbf{v}$ has a zero in the
  interval $[0, \exp \exp (10 H_{\mathbf{v}} \log H_{\mathbf{v}})]$.

  Of course, even if Conjecture~\ref{conj:good} were to be established
  and $C$ explicitly known, and setting aside the question of how to
  obtain the oracle, the above algorithm unfortunately remains impractical, since
  the magnitudes involved are much too large to envisage any reasonable implementation.

\bibliographystyle{elsarticle-num} 
\bibliography{cramer.bib}

@article{OHP14,
  author    = {T. {Oliveira e Silva} and S. Herzog and S. Pardi},
  title     = {Empirical verification of the even {G}oldbach
conjecture and computation of prime gaps up to $4 \cdot 10^{18}$},
  journal   = {Math. Comp.},
  volume    = {83},
  number    = {288},
  pages     = {2033--2060},
  year      = {2014},
  publisher = {American Mathematical Society}
}

@article{LMNOW,
  author       = {F. Luca and
                  J. Maynard and
                  A. Noubissie and
                  J. Ouaknine and
                  J. Worrell},
  title        = {Skolem Meets Bateman-Horn},
  journal      = {CoRR},
  volume       = {abs/2308.01152},
  year         = {2023}
}

@book{FrohlichTaylor1993,
  author    = {A. Fröhlich and M. J. Taylor},
  title     = {Algebraic Number Theory},
  series    = {Cambridge Studies in Advanced Mathematics},
  volume    = {27},
  year      = {1993},
  publisher = {Cambridge University Press}
}

@article{Odlyzko1999,
  author    = {A. Odlyzko and M. Rubinstein and M. Wolf},
  title     = {Jumping Champions},
  journal   = {Experimental Mathematics},
  volume    = {8},
  number    = {2},
  pages     = {107--118},
  year      = {1999},
  publisher = {Taylor & Francis}
}

@article{Jia96,
author = {C. Jia},
title = {Almost All Short Intervals Containing Prime Numbers},
journal = {Acta Arith.},
volume = {76},
number = {1},
pages = {21--84},
year = {1996},
publisher={Instytut Matematyczny Polskiej Akademii Nauk}
}

@article{AV11,
author = {F. Amoroso and E. Viada},
title = {On the Zeros of Linear Recurrence Sequences},
journal = {Acta Arith.},
volume = {147},
number = {4},
pages = {387--396},
year = {2011},
publisher={Instytut Matematyczny Polskiej Akademii Nauk}
}

@article{Granville1995,
  author    = {A. Granville},
  title     = {Harald {C}ramér and the Distribution of Prime Numbers},
  journal   = {Scandinavian Actuarial Journal},
  volume    = {1995},
  number    = {1},
  pages     = {12--28},
  year      = {1995},
  publisher = {Taylor & Francis}
}

@article{BakerHarmanPintz2001,
  author    = {R. C. Baker and G. Harman and J. Pintz},
  title     = {The Difference Between Consecutive Primes, II},
  journal   = {Proceedings of the London Mathematical Society},
  volume    = {83},
  number    = {3},
  pages     = {532--562},
  year      = {2001},
  publisher = {Oxford University Press}
}

@Book{BOOK,
  author =       {G. Everest and A. van der Poorten and
                  I. Shparlinski and T. Ward},
  title =        {Recurrence Sequences},
  publisher =    {American Mathematical Society},
  year =         2003}

@article{Stanley,
  title={Enumerative Combinatorics},
  volume = {Volume 1, 2nd Edition},
  author={Stanley, R. P.},
  journal={Cambridge studies in advanced mathematics},
  year={2011}
}

@book{TCT,
  author       = {M. Kauers and
                  P. Paule},
  title        = {The Concrete Tetrahedron --- Symbolic Sums, Recurrence Equations, Generating
                  Functions, Asymptotic Estimates},
  series       = {Texts {\&} Monographs in Symbolic Computation},
  publisher    = {Springer},
  year         = {2011}
}

@InProceedings{Sko34,
  author =       {T. Skolem},
  title =        {Ein {V}erfahren zur {B}ehandlung gewisser
exponentialer {G}leichungen},
  booktitle = {Comptes rendus du congr\`es des math\'ematiciens
scandinaves},
  year =         1934}

@Article{Mah35,
  author =       {K. Mahler},
  title =        {Eine arithmetische {E}igenschaft der
{T}aylor {K}oeffizienten rationaler {F}unktionen},
  journal =      {Proc. Akad. Wet. Amsterdam},
  year =         1935,
  volume =       38}

@Article{Lec53,
  author =       {C. Lech},
  title =        {A note on recurring series},
  journal =      {Ark. Mat.},
  year =         1953,
  volume =       2}

@book{Tao08,
  author  = {T. Tao},
  title = {Structure and randomness: pages from year one of a mathematical blog},
  publisher = {American Mathematical Society},
  year = {2008}
}

@Article{MST84,
  author =       {M. Mignotte and T. Shorey and R. Tijdeman},
  title =        {The distance between terms of an algebraic
                  recurrence sequence},
  journal =      {J. f\"ur die reine und angewandte Math.},
  year =         1984,
  volume =       349}

@Article{Ver85,
  author =       {N. K. Vereshchagin},
  title =        {The problem of appearance of a zero in a linear
                  recurrence sequence (in {R}ussian)},
  journal =      {Mat. Zametki},
  year =         1985,
  volume =       38,
  number =       2}

@inproceedings{AkshayBMV020,
  author       = {S. Akshay and
                  N. Balaji and
                  A. Murhekar and
                  R. Varma and
                  N. Vyas},
  title        = {Near-Optimal Complexity Bounds for Fragments of the {S}kolem Problem},
  booktitle    = {{STACS}},
  series       = {LIPIcs},
  volume       = {154},
  pages        = {37:1--37:18},
  publisher    = {Schloss Dagstuhl - Leibniz-Zentrum f{\"{u}}r Informatik},
  year         = {2020}
}

@book{RS94,
author = {G. Rozenberg and A. Salomaa},
title = {Cornerstones of Undecidability},
year = {1994},
publisher = {Prentice Hall}
}

@article{BRS06,
  author    = {D. Beauquier and
               A. M. Rabinovich and
               A. Slissenko},
  title     = {A Logic of Probability with Decidable Model Checking},
  journal   = {J. Log. Comput.},
  volume    = {16},
  number    = {4},
  year      = {2006}
}

@inproceedings{PiribauerB20,
  author       = {J. Piribauer and
                  C. Baier},
  title        = {On {S}kolem-Hardness and Saturation Points in {M}arkov Decision Processes},
  booktitle    = {{ICALP}},
  series       = {LIPIcs},
  volume       = {168},
  pages        = {138:1--138:17},
  publisher    = {Schloss Dagstuhl - Leibniz-Zentrum f{\"{u}}r Informatik},
  year         = {2020}
}

@article{BlondelT00,
  author    = {V. Blondel and
               J. Tsitsiklis},
  title     = {A survey of computational complexity results in systems
               and control},
  journal   = {Automatica},
  volume    = {36},
  number    = {9},
  year      = {2000},
  pages     = {1249-1274}
}

@inproceedings{FijalkowOPP019,
  author       = {N. Fijalkow and
                  J. Ouaknine and
                  A. Pouly and
                  J. Sousa Pinto and
                  J. Worrell},
  title        = {On the decidability of reachability in linear time-invariant systems},
  booktitle    = {{HSCC}},
  pages        = {77--86},
  publisher    = {{ACM}},
  year         = {2019}
}

@article{OWSiglog15,
  author    = {J. Ouaknine and
               J. Worrell},
  title     = {On linear recurrence sequences and loop termination},
  journal   = {{ACM} {SIGLOG} News},
  volume    = {2},
  number    = {2},
  pages     = {4--13},
  year      = {2015}
}

@article{KL86,
  author    = {R. Kannan and
               R. J. Lipton},
  title     = {Polynomial-Time Algorithm for the Orbit Problem},
  journal   = {JACM},
  volume    = {33},
  number    = {4},
  year      = {1986}
}

@article{CLZ00,
  author    = {J.-Y. Cai and
               R. J. Lipton and
               Y. Zalcstein},
  title     = {The Complexity of the {A B C} Problem},
  journal   = {SIAM J. Comput.},
  volume    = {29},
  number    = {6},
  year      = {2000}
}

@book{Berstel2010NoncommutativeRS,
  title={Noncommutative Rational Series with Applications},
  author={J. Berstel and C. Reutenauer},
publisher = {Cambridge University Press},
year={2010}
}

@article{cramer1936,
  title={On the order of magnitude of the difference between consecutive prime numbers},
  author={Cram{\'e}r, H.},
  journal={Acta Arith.},
  volume={2},
  pages={23--46},
  year={1936},
  publisher={Instytut Matematyczny Polskiej Akademii Nauk}
}

@inproceedings{LLNOP022,
  author       = {R. Lipton and
                  F. Luca and
                  J. Nieuwveld and
                  J. Ouaknine and
                  D. Purser and
                  J. Worrell},
  title        = {On the {S}kolem Problem and the {S}kolem Conjecture},
  booktitle    = {{LICS}},
  pages        = {5:1--5:9},
  publisher    = {{ACM}},
  year         = {2022}
}

@inproceedings{BLNOPW22,
  author       = {Y. Bilu and
                  F. Luca and
                  J. Nieuwveld and
                  J. Ouaknine and
                  D. Purser and
                  J. Worrell},
  title        = {Skolem Meets {S}chanuel},
  booktitle    = {{MFCS}},
  series       = {LIPIcs},
  volume       = {241},
  pages        = {20:1--20:15},
  publisher    = {Schloss Dagstuhl - Leibniz-Zentrum f{\"{u}}r Informatik},
  year         = {2022}
}

@inproceedings{LOW21,
  author       = {F. Luca and
                  J. Ouaknine and
                  J. Worrell},
  title        = {Universal {S}kolem Sets},
  booktitle    = {{LICS}},
  pages        = {1--6},
  publisher    = {{IEEE}},
  year         = {2021}
}

@inproceedings{LOW22,
  author       = {F. Luca and
                  J. Ouaknine and
                  J. Worrell},
  title        = {A Universal {S}kolem Set of Positive Lower Density},
  booktitle    = {{MFCS}},
  series       = {LIPIcs},
  volume       = {241},
  pages        = {73:1--73:12},
  publisher    = {Schloss Dagstuhl - Leibniz-Zentrum f{\"{u}}r Informatik},
  year         = {2022}
}

@article{nicely99,
  author       = {T. R. Nicely},
  title        = {New maximal prime gaps and first occurrences},
  journal      = {Math. Comput.},
  volume       = {68},
  number       = {227},
  pages        = {1311--1315},
  year         = {1999}
}

@article{ford18,
  title={Long gaps between primes},
  author={Ford, Kevin and Green, Ben and Konyagin, Sergei and Maynard, James and Tao, Terence},
  journal={Journal of the American Mathematical Society},
  volume={31},
  number={1},
  pages={65--105},
  year={2018}
}

\end{document}